\documentclass[12pt]{article}
\usepackage{amsmath}
\usepackage{amssymb}
\usepackage[latin1]{inputenc}
\usepackage{hyperref}

\newtheorem{corollary}{Corollary}

\newtheorem{lemma}{Lemma}
\newtheorem{theorem}{Theorem}

\newenvironment{proof}{\par\emph{Proof:~}}{\hfill$\square$\par}
\textwidth 132mm \textheight 19cm \evensidemargin 0cm
\oddsidemargin 0cm

\makeatletter
\def\@seccntformat#1{\csname the#1\endcsname.\ } 
\makeatother

\title{On the binary codes with parameters of triply-shortened $1$-perfect codes%
\thanks{The work was supported by the Federal Target Grant
``Scientific and Educational Personnel of Innovation Russia''
for 2009-2013 (government contract No. 02.740.11.0429) and the
Russian Foundation for Basic Research (grant 10-01-00424).}
}
\author{\href{http://arXiv.org/a/krotov_d_1}{Denis S. Krotov}%
\thanks{Sobolev Institute of Mathematics, Novosibirsk, Russia. \texttt{krotov@math.nsc.ru}}
\thanks{Mechanics and Mathematics Department, Novosibirsk State University, Russia}
}
\date{}

\begin{document}
\maketitle
\begin{abstract}
We study properties of binary codes with parameters close
to the parameters of $1$-perfect codes.
An arbitrary binary $(n=2^m-3, 2^{n-m-1}, 4)$ code $C$,
i.e., a code with parameters
of a triply-short\-ened extended Hamming code,
is a cell of an equitable partition of the $n$-cube into six cells.
An arbitrary binary $(n=2^m-4, 2^{n-m}, 3)$ code $D$, i.e., a code
with parameters of a triply-short\-ened Hamming code,
is a cell of an equitable family (but not a partition) from six cells.
As a corollary, the codes $C$ and $D$ are completely semiregular;
i.e., the weight distribution of such a code depends only
on the minimal and maximal codeword weights and the code parameters.
Moreover,
if $D$ is self-complementary, then it is completely regular.

As an intermediate result, we prove,
in terms of distance distributions,
 a general criterion for
a partition of the vertices of a graph
(from rather general class of graphs, including the distance-regular graphs) to be equitable.
\providecommand\keywords{\par \textbf{Keywords:}~}
\providecommand\subclass{\par \textbf{MSC:}~}
\keywords{$1$-perfect code; triply-shortened $1$-perfect code; equitable partition; perfect coloring; weight distribution; distance distribution}
\subclass{94B25}
\end{abstract}
\section{{Introduction}}

In this paper, we prove some regular properties
of the binary codes with parameters of
triply-shortened (extended) Hamming code.
We will see that these codes
have more commonality with the class of perfect codes
than simply optimality and close parameters.
The subject and approach
have a similarity with the previous paper about the
doubly-shortened case \cite{Kro:2m-3}, but there are some new essentials.
At first, for describing all results,
we need to generalize
the concept of equitable partition,
leaving it rather strong
to inherit the main algebraic-combinatorial properties.
At second, we derive, as corollaries,
 new properties of the considered class of codes,
such as some weaker variant of complete regularity.
At third, we prove a general criterion
on equitability of a partition,
whose usability is not bounded by the current research.
Some properties of the codes with considered parameters were found in \cite{KOP:2011}
and utilized there for classification of codes with small parameters.

\def\3#1{{\widetilde{#1}}} 
\def\0{\3{0}}\def\1{\3{1}}\def\2{\3{2}} 

We call a collection $\mathbf{P}=(P_0, P_1, \ldots, P_{r-1})$ of vertex subsets ({\em cells}) of a simple graph $G=(V,E)$
(in this paper, a binary Hamming graph, or a hypercube)
 an {\em equitable family} if there is a matrix $(s_{ij})_{i,j=0}^{r-1}$ (the {\em quotient} matrix) such
that any vertex $\bar x$ has exactly $\sum_{i\in \mathbf{i}(\bar x)} s_{ij}$ neighbors
from $P_j$ for every $j=0,1,...,r-1$
where $\mathbf{i}(\bar x)=\{i\mid \bar x \in P_i\}$.
If $P_0, P_1, \ldots, P_{r-1}$ are mutually disjoint
and cover  whole $V$, then $\mathbf{P}$ is known
as an {\em equitable partition}.

Famous examples of equitable partitions in regular graphs are
$1$-perfect codes (together with their complements).
In the case of a hypercube, the corresponding quotient matrix is $((0,n)(1,n{-}1))$
and the parameters of a code are $(n=2^m-1, 2^n-m, 3)$ (the code length,
or the hypercube dimension; the cardinality;
the minimal distance between codewords). Trivially, such codes are optimal, i.e.,
have the maximum cardinality for given length and code distance.
As shown in \cite{BesBro77}, any $(n=2^m-1-t, 2^{n-m}, 3)$ code is also optimal for
$t=1,2,3$.
For short, the parameters
$(n=2^m-1-t, 2^{n-m}, 3)$ and $(n=2^m-t, 2^{n-m-1}, 4)$, $t=0,1,2,3$
will be referred to as
\def\opt{_{\scriptscriptstyle \mathrm{op}}}
$(n, 3)\opt$, $(n, 3)'\opt$, $(n, 3)''\opt$, $(n, 3)'''\opt$ and
$(n, 4)\opt$, $(n, 4)'\opt$, $(n, 4)''\opt$, $(n, 4)'''\opt$, respectively.

Every  $(n, 3)'\opt$ 
code is indeed a {\em shortened} $1$-perfect
$(n+1, 3)\opt$ 
code \cite{Bla99}, i.e.,
can be obtained from a $1$-perfect code by fixing one coordinate.
Moreover, it can be seen that every
$(n, 3)'\opt$ 
code is a cell of an
equitable partition with quotient matrix $((0,n,0)(1,n{-}2,1)(0,n,0))$.

The situation with $(n, 3)''\opt$ 
is different.
There are
such codes that cannot be represented as doubly-shortened $1$-perfect
\cite{OstPot:13-512-3,KOP:2011}.
Nevertheless, every $(n, 3)''\opt$ 
code is a cell of
an equitable partition with quotient matrix\linebreak[4]
$((0,1,n{-}1,0)(1,0,n{-}1,0)(1,1,n{-}4,2)(0,0,n{-}1,1))$ \cite{Kro:2m-3}.

Our current topic is the case of $(n, 3)'''\opt$. 
For these parameters, examples of codes that are not
triply-shortened $1$-perfect
are also known \cite{OstPot:13-512-3,KOP:2011}.
Moreover, for $n\geq 12$
there are $(n, 3)'''\opt$ 
codes that cannot be represented
as a cell of an equitable partition, because such codes are not
distance invariant in general (by shortening a nonlinear $1$-perfect
code, it is possible to obtain an
$(n, 3)'''\opt$ 
code whose
weight distribution with respect to a code vertex depends on the
choice of this vertex).
We state that, nevertheless, such a code is
a cell of some generalization of an equitable partition (equitable
family), which inherit the main algebraic properties
of equitable partitions.
Moreover, if we extend such a code to an
$(n+1, 4)'''\opt$ 
code, by adding the parity-check bit,
then the code obtained will be a cell of an equitable partition.
As a corollary, we derive some variant of distance invariance for the codes with considered parameters.

We start with distance-$4$ codes.
In Section~\ref{s:dd}, we consider an arbitrary
$(n, 4)'''\opt$ 
code $C_0$, define the other
five cells of the generated partitions, and prove
that the mutual distance distribution of the partition cells
does not depend on the choice of the code.
In Section~\ref{s:crit},
we prove rather general criterion for a partition
of the vertices of a graph to be equitable.
In Section~\ref{s:main}, we use this criterion to show that
the partition generated by $C_0$ is equitable; as a corollary,
we derive that any $(n, 3)'''\opt$ 
code also generates an
equitable family.
In Section~\ref{s:reg}, we prove some weak form of complete regularity
for the distance-$3$ and distance-$4$ codes with considered parameters and
the distance invariance for the distance-$4$ codes.
In the last section, we mention two other interesting properties of the considered classes of codes,
one of which was proved earlier in the paper \cite{KOP:2011}.

\section{Generated subsets and distance distributions}\label{s:dd}

The $n$-dimensional hypercube graph will be denoted by
$H^n = (V(H^n),E(H^n))$.
Recall, that $V(H^n)$ consists of the words of length $n$
in the alphabet $\{0,1\}$,
two words being adjacent if and only if
they differ in exactly one position.
By $d(\cdot,\cdot)$ we denote the natural graph distance
in $H^n$ (Hamming distance); by $\overline 0$ and $\overline 1$, the
all-zero and all-one words respectively.
The graph $H^n$ is bipartite, and we denote its parts by $V_{\mathrm{ev}}$ and $V_{\mathrm{od}}$,
$V_{\mathrm{ev}}$ containing $\overline 0$.

Let $C_0$ be an $(n, 4)'''\opt$ 
code.
As proved in \cite{KOP:2011} (see Lemma~\ref{l:0} below),
the mutual distances between the codewords of $C_0$ are even;
i.e., either
$C_0\subset V_{\mathrm{ev}}$ or
$C_0\subset V_{\mathrm{od}}$.
We assume the former.
Define
\begin{eqnarray}
 C_\0 &=& C_0 + \overline 1 , \label{eq:C0def} \\
 C_1 &=& \{\bar x \mid d(\bar x, C_\0) = 1,\ \bar x\not\in C_0 \} , \\
 C_\1 &=& \{\bar x \mid d(\bar x, C_0) = 1,\ \bar x\not\in C_\0 \} = C_1 + \overline 1 , \\
 C_2 &=& V_{\mathrm{ev}}\setminus (C_0 \cup C_1),\\
 C_\2 &=& V_{\mathrm{od}}\setminus (C_\0 \cup C_\1)= C_2 + \overline 1. \label{eq:C2def}
\end{eqnarray}
For convenience, we will associate $\0$, $\1$ and $\2$
with the numbers $3$, $4$ and $5$.
So, $(C_i)_{i=0}^{\2}$ is a partition of $V(H^n)$,
while $(C_0,C_1,C_2)$ and $(C_\0,C_\1,C_\2)$
are partitions of $V_{\mathrm{ev}}$ and $V_{\mathrm{od}}$ respectively.
Denote
$$ A^j_l (\bar x) = |\{ \bar y \in C_j \mid d(\bar x, \bar y) = l \}|,
\qquad j\in \{0,1,2,\0,\1,\2\}, \ \bar x \in V(H^n); $$
the
{$(n{+}1)$}-tuple $(A^j_0 (\bar x),A^j_1 (\bar x),\ldots ,A^j_n
(\bar x))$ is known as the {\em weight distribution} of $C_j$ with respect
to $\bar x$;

\def\A{\overline{A}\vphantom{A}}
\def\B{\overline{B}\vphantom{B}}
$$ \A^{ij}_l = \frac 1{| C_i|} \sum_{\bar x \in C_i} A^j_l(\bar x),
\qquad i,j\in \{0,1,2,\0,\1,\2\};$$ the collection
${(}(\A^{ij}_l{)_{i,j=0}^\2)_{l=0}^n}$ will be referred to as the
{\em distance distribution} of $(C_i)_{i=0}^\2$; the {$(n{+}1)$}-tuple
$(\A^{ii}_0,\A^{ii}_1,\ldots ,\A^{ii}_n)$ is known as the {\em inner
distance distribution} of $C_i$.

As noted in \cite{BesBro77}, there are more than one possibility for the
inner distance distribution of an $(n,3)'''\opt$ code.
However, the ``extended'' variant of
the proof of \cite[Theorem 6.1]{BesBro77}
provides us with the following key statement:

\begin{lemma}[\cite{KOP:2011}]\label{l:0}
The inner distance distribution of an
$(n, 4)'''\opt$  
code $C_0$ does not depend on the choice of the code.
In particular, $\A^{00}_{n-1}=1$ and $\A^{00}_{i}=0$ for odd $i$.
\end{lemma}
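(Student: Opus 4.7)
\emph{Proof proposal.} My plan is to apply the Delsarte linear-programming framework to the inner distance distribution $B_i := \A^{00}_i$ of $C_0$. The usual constraints read $B_0 = 1$, $B_i \geq 0$, $B_1 = B_2 = B_3 = 0$ (from minimum distance $4$), $\sum_{i=0}^n B_i = 2^{n-m-1}$, and $\sum_{i=0}^n B_i K_k(i;n) \geq 0$ for each $k = 0,1,\ldots,n$, where $K_k(\cdot;n)$ is the binary Krawtchouk polynomial. I aim to show that, at $n = 2^m - 3$, this LP has a unique feasible point, which forces the uniqueness of the inner distance distribution.

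The first substantive step is to eliminate the odd-indexed $B_i$, equivalently, to show $C_0 \subseteq V_{\mathrm{ev}}$ or $C_0 \subseteq V_{\mathrm{od}}$. I would append a parity-check bit to obtain a length-$(2^m-2)$ code of the same cardinality; if the parities of codewords in $C_0$ were mixed, the extended code would have minimum distance $\geq 5$ across opposite-parity pairs, strengthening the LP constraints at length $n+1$ and contradicting the cardinality bounds for doubly-shortened extended Hamming parameters (reachable by the same Best--Brouwer argument \cite{BesBro77}). With the odd entries eliminated, the LP reduces to unknowns at even indices $\geq 4$. Since the cardinality $2^{n-m-1}$ already saturates the Best--Brouwer bound, complementary slackness forces $\sum_i B_i K_k(i;n) = 0$ for every $k$ outside a small ``dual support,'' which I expect to match the dual weights of the ambient extended Hamming code. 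The resulting homogeneous system is square in the remaining $B_i$ and inverts to a unique solution; one then reads off $\A^{00}_{n-1} = 1$ as one of its entries.

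The main obstacle, I expect, is the final rank/invertibility check: verifying that the complementary-slackness set is large enough and that the associated Krawtchouk matrix at $n = 2^m - 3$ is nonsingular. This is precisely the extra content of the ``extended'' variant over the original Best--Brouwer argument, which recovers $|C|$ only and leaves the distance distribution underdetermined (consistent with the fact that the distance-$3$ analogue admits no such uniqueness). A secondary difficulty lies in the parity-elimination step itself, which may require an independent LP argument at length $n+1 = 2^m - 2$ or explicit small-$m$ verification.
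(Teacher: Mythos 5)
Your overall strategy---Delsarte's linear programming applied to the inner distance distribution, with uniqueness of the feasible point extracted from tightness of the Best--Brouwer bound---is exactly the route the paper points to: the paper gives no proof of its own, attributing the lemma to \cite{KOP:2011} and describing it as the ``extended'' (i.e., distance-$4$, even-weight) variant of the proof of \cite[Theorem~6.1]{BesBro77}. At the level of method you are on the right track, and your closing remark that the distance-$3$ analogue admits no such uniqueness correctly identifies why the argument must be run on the extended code rather than on the $(n,3)'''$ code.

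However, your parity-elimination step fails as stated. Appending an overall parity bit to $C_0$ produces a code of length $2^m-2$ and cardinality $2^{n-m-1}=2^{2^m-m-4}$ whose minimum distance is still $4$ (same-parity pairs at distance $4$ remain at distance $4$; only the opposite-parity pairs, already at distance at least $5$, move up to $6$), and this cardinality is only \emph{half} of the optimal cardinality $2^{2^m-m-3}$ for the parameters $(2^m-2,\,2^{2^m-m-3},\,4)$. Hence no contradiction with any cardinality bound is available: the extended code exists and is unexceptional whether or not the parities in $C_0$ are mixed. Likewise, splitting $C_0$ into its even- and odd-weight halves and bounding each half separately by the Best--Brouwer bound only yields $|C_0|\leq 2|C_0|$. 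The vanishing of the odd-indexed $B_i$ cannot be established in advance by such a reduction; it must emerge from the LP itself, as one coordinate of the unique solution, together with $B_{n-1}=1$. That, combined with the unproved assertion that complementary slackness yields a square nonsingular Krawtchouk system (which you yourself flag as the main obstacle), means that essentially all of the content of the lemma---which resides in \cite{KOP:2011}---remains to be supplied.
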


It is not difficult to expand this fact
to all the coefficients ${(}(\A^{ij}_l{)_{i,j=0}^\2)_{l=0}^n}$:

\begin{lemma}\label{l:012345}
The distance distribution ${(}(\A^{ij}_l{)_{i,j=0}^\2)_{l=0}^n}$ of
$(C_i)_{i=0}^\2$ does not depend on the choice of the
$(n, 4)'''\opt$ 
code $C_0$.
\end{lemma}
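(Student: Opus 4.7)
The plan is to reduce all $36$ entries $\A^{ij}_l$ to the known $\A^{00}_l$ (Lemma~\ref{l:0}) via three steps: symmetry/parity; a pointwise strengthening of Lemma~\ref{l:0}; and two systematic double counts. For symmetry, the involution $\sigma(\bar x) = \bar x + \overline 1$ is an isometry of $H^n$ swapping $C_i$ with $C_{\3{i}}$, and $d(\bar x, \bar y + \overline 1) = n - d(\bar x, \bar y)$; hence $\A^{\3{i}\3{j}}_l = \A^{ij}_l$ and $\A^{\3{i}j}_l = \A^{i\3{j}}_l = \A^{ij}_{n-l}$. As $C_0, C_1, C_2 \subset V_{\mathrm{ev}}$, bipartiteness also forces $\A^{ij}_l = 0$ for $i,j \in \{0,1,2\}$ and odd $l$. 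So only $\A^{ij}_l$ with $i,j \in \{0,1,2\}$ and even $l$ need to be determined.

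The crucial pointwise step is: $A^0_{n-1}(\bar x) \le 1$ for every $\bar x \in C_0$ (two codewords at distance $n-1$ from $\bar x$ would be at mutual distance $2$, violating $d(C_0)=4$), so together with $\A^{00}_{n-1}=1$ we get $A^0_{n-1}(\bar x)=1$ \emph{pointwise}. Translating via $\sigma$, each $\bar x \in C_0$ has a unique neighbor in $C_\0$; its remaining $n-1$ neighbors are forced into $C_\1$ (they are adjacent to $\bar x + \overline 1 \in C_\0$ yet lie outside $C_\0$). Elementary double counts then yield $|C_1|=(n-1)|C_0|$, $|C_2|=3|C_0|$, and the bijective parametrization
\[
C_1 \;=\; \{\bar z + e_k : \bar z \in C_\0,\ k \in \{1,\ldots,n\},\ \bar z + e_k \notin C_0\}
\]
(each $\bar y \in C_1$ has a unique $C_\0$-neighbor, and each $\bar z \in C_\0$ has exactly one forbidden $k$).

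This parametrization powers the computation of $\A^{01}_l$: writing $\bar z = \bar w + \overline 1$ with $\bar w \in C_0$, one has $d(\bar x, \bar y) = n - d(\bar x + e_k, \bar w)$ for $\bar x \in C_0$ and $\bar y = \bar z + e_k$, and a case split on whether $\bar x_k = \bar w_k$ expresses $\sum_{\bar x \in C_0} A^1_l(\bar x)$ in terms of $\A^{00}_{n-l\pm 1}$ and $\A^{00}_l$; this yields a closed formula for $\A^{01}_l$ as a linear combination of $\A^{00}_\cdot$. Then $\A^{02}_l = \binom{n}{l} - \A^{00}_l - \A^{01}_l$, and reciprocity $|C_i|\A^{ij}_l = |C_j|\A^{ji}_l$ gives $\A^{10}_l, \A^{20}_l$. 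For the remaining $\A^{11}_l, \A^{12}_l, \A^{22}_l$ I would use the neighbor-sum identity
\[
\sum_{\bar z \sim \bar y} A^j_l(\bar z) \;=\; (n-l+1) A^j_{l-1}(\bar y) + (l+1) A^j_{l+1}(\bar y)
\]
summed over $\bar y \in C_0$: since $A^0_1(\bar z)=1$ exactly on $C_\0 \cup C_\1$ (and $0$ elsewhere, by the pointwise fact and bipartiteness), the left-hand side equals $|C_0|\A^{0j}_{n-l} + |C_1|\A^{1j}_{n-l}$, so $j=1$ recovers $\A^{11}_{n-l}$ and $j=2$ recovers $\A^{12}_{n-l}$; finally $\A^{22}_l = \binom{n}{l} - \A^{20}_l - \A^{21}_l$ closes the computation.

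The main obstacle is the pointwise upgrade from the averaged $\A^{00}_{n-1}=1$ to $A^0_{n-1}(\bar x)=1$: it converts Lemma~\ref{l:0}'s averaged statement into the rigid combinatorial structure that makes the parametrization of $C_1$ (and hence the two double counts) exact. After that, the argument is only bookkeeping with double counting, reciprocity, and the covering identity $\sum_j \A^{ij}_l = \binom{n}{l}$.
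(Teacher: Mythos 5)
Your proposal is correct and follows essentially the same route as the paper: both deduce from Lemma~\ref{l:0} and the minimum distance that every codeword of $C_0$ has exactly one neighbour in $C_{\widetilde 0}$ (and vice versa), and then propagate $\A^{00}_l$ to all $\A^{ij}_l$ using the neighbour-counting recurrence, the antipodal symmetry $\A^{ij}_l=\A^{i\widetilde j}_{n-l}$, the reciprocity $|C_i|\,\A^{ij}_l=|C_j|\,\A^{ji}_l$, and the covering identity --- your explicit parametrization of $C_1$ and your sum of the neighbour identity over $C_0$ are just unrolled instances of the paper's recurrence $\A^{i1}_l=(n-l+1)\A^{i\widetilde 0}_{l-1}+(l+1)\A^{i\widetilde 0}_{l+1}-\A^{i0}_l$. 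One parenthetical slip worth fixing: the $n-1$ remaining neighbours of $\bar x\in C_0$ land in $C_{\widetilde 1}$ because they are adjacent to $\bar x\in C_0$ itself (not to $\bar x+\overline 1$), but this does not affect the argument.
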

\begin{proof}
Since, because of the code distance, every vertex of $C_0$ has
not more than one neighbor from $C_\0$,
we find from $\A^{0\0}_{1}=\A^{00}_{n-1}=1$ that it has exactly one such neighbor. And vise versa, every vertex of $C_\0$ has exactly one neighbor from $C_0$.
Then, from the definitions of $C_i$ and $\A^{ij}_l$,
we have, for every $i\in \{0,1,2,\0,\1,\2\}$,
\begin{eqnarray*}
\A^{i1}_l &=& (n-l+1)\cdot \A^{i\0}_{l-1} +(l+1)\cdot \A^{i\0}_{l+1} - \A^{i0}_{l}, \\
\A^{i2}_l &=& \left({n \atop l}\right) - \A^{i0}_{l} - \A^{i1}_{l}, \qquad \mbox{$l$ even if $i\in \{0,1,2\}$, $l$ odd if $i\in \{\0,\1,\2\}$, } \\
\A^{ij}_l &=& \A^{i\,\3{j}}_{n-l} \qquad \forall j\in\{0,1,2\}, \\
|C_i|\cdot \A^{ij}_l & = & |C_j|\cdot \A^{ji}_l \qquad \forall j\in\{0,1,2,\0,\1,\2\}
\end{eqnarray*}
(see the similar \cite[Lemma~3]{Kro:2m-3} for details).
Using these formulas and starting from $(\A^{00}_l)_l$, we can derive
$(\A^{ij}_l)_l$ for every $i,j\in \{0,1,2,\0,\1,\2\}$.
\end{proof}
As we will see in Section~\ref{s:reg}, even the weight distribution
$(A^j_l(\bar x))_{l=0}^n$ depends only on $j$ and $i$ such that $\bar x\in C_i\cap C_j$, and does not depend on the choice of $C_0$ or $\bar x$ from $C_i\cap C_j$.
But now we have only the distance distribution
and we have to derive
from this knowledge that the partition is equitable.
It turns out, there is a general fact
connecting the distance distribution
of a partition with its equitability,
and this is the topic of the next section.


\section{A criterion on equitability}\label{s:crit}
We will formulate a criterion on equitability of partitions
in quite general class of graphs, including so-called distance-regular graphs. For the hypercube, the parameters $\gamma$ and $\delta$ in the following lemma equal $0$ and $2$ respectively.
\begin{lemma}\label{l:equi}
Let $G=(V(G),E(G))$ be a simple graph.
Assume that there are two constants $\gamma$ and $\delta$
such that, in $G$, every two adjacent vertices have $\gamma$
common neighbors and every two non-adjacent vertices have
$0$ or $\delta$ common neighbors.
Let $\mathbf{C}=(C_0,\ldots,C_k)$ be a partition of
$V(G)$ with distance distribution
$((\A^{ij}_l)_{i,j=0}^k)_{l=0}^n$. Then the following three statements are equivalent:
\begin{itemize}
\item[\rm (a)] The partition $\mathbf{C}$ is equitable.

\item[\rm (b)] The numbers $\A^{ji}_1$ and $\A^{ii}_2$ satisfy
$$
|C_i|( \gamma\A^{ii}_1 +\delta\A^{ii}_2) = \sum_{j=0}^k |C_j|\cdot \A^{ji}_1(\A^{ji}_1-1)\qquad \forall i\in\{0,...,k\}.
$$
\item[\rm (c)] There is at least one equitable partition of $V(G)$
with the same numbers
  $\A^{ij}_1$ and $\A^{ii}_2$, $i,j=0,...,k,$ in the distance distribution.
\end{itemize}
\end{lemma}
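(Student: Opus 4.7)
The plan is to prove the cycle $(a) \Rightarrow (c) \Rightarrow (b) \Rightarrow (a)$. The implication $(a) \Rightarrow (c)$ is trivial since $\mathbf{C}$ itself witnesses $(c)$. Once the main direction $(b) \Rightarrow (a)$ and its converse $(a) \Rightarrow (b)$ are established, $(c) \Rightarrow (b)$ is automatic: the equitable partition supplied by $(c)$ satisfies $(b)$ by $(a) \Rightarrow (b)$, and $(b)$ is a numerical identity in the quantities $|C_i|$, $\A^{ji}_1$, and $\A^{ii}_2$, which the hypothesis of $(c)$ shares with $\mathbf{C}$.

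The heart of the argument is a double count of
$$ S_i := \sum_{\bar y, \bar z \in C_i,\ \bar y \neq \bar z} |N(\bar y)\cap N(\bar z)|, $$
where $N(\cdot)$ is the open neighborhood in $G$. In a simple graph, vertices at distance $\geq 3$ share no common neighbor; by hypothesis, adjacent pairs share $\gamma$, and distance-$2$ pairs---which automatically have at least one common neighbor, so the ``$0$ or $\delta$'' dichotomy forces them into the ``$\delta$'' case---share exactly $\delta$. Splitting the outer sum according to $d(\bar y,\bar z)\in\{1,2\}$ therefore yields
$$ S_i \;=\; |C_i|\bigl(\gamma \A^{ii}_1 + \delta \A^{ii}_2\bigr). $$
Exchanging the order of summation and setting $a_i(\bar x) := |N(\bar x)\cap C_i|$, the same quantity equals
$$ S_i \;=\; \sum_{\bar x \in V(G)} a_i(\bar x)\bigl(a_i(\bar x)-1\bigr) \;=\; \sum_{j=0}^{k}\sum_{\bar x \in C_j} a_i(\bar x)\bigl(a_i(\bar x)-1\bigr). $$

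The next step is Cauchy--Schwarz (equivalently, nonnegativity of variance) applied cell by cell: since $\A^{ji}_1 = \frac{1}{|C_j|}\sum_{\bar x\in C_j} a_i(\bar x)$, one has
$$ \sum_{\bar x\in C_j} a_i(\bar x)\bigl(a_i(\bar x)-1\bigr) \;\geq\; |C_j|\,\A^{ji}_1\bigl(\A^{ji}_1-1\bigr), $$
with equality iff $a_i$ is constant on $C_j$ (necessarily with value $\A^{ji}_1$). Summing over $j$ and comparing with the first evaluation of $S_i$ gives the universal inequality
$$ |C_i|\bigl(\gamma \A^{ii}_1 + \delta \A^{ii}_2\bigr) \;\geq\; \sum_{j=0}^k |C_j|\,\A^{ji}_1\bigl(\A^{ji}_1-1\bigr), $$
valid for every partition of $V(G)$. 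Statement $(b)$ asserts equality here for every $i$; the equality case of Cauchy--Schwarz then forces $a_i$ to be constant on every $C_j$, for all $i$ and $j$, which is exactly the definition of an equitable partition. Thus $(a) \Leftrightarrow (b)$, which together with the earlier remarks closes the cycle.

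I do not expect a real technical obstacle; the only delicate point is verifying that the ``$0$ or $\delta$'' hypothesis is used correctly. In a simple graph, pairs at distance $\geq 3$ are in the ``$0$'' case automatically, so the hypothesis is substantively invoked only to assign weight $\delta$ to every distance-$2$ pair---this is what makes the coefficient in the double count a clean function of $(\A^{ii}_1, \A^{ii}_2)$ alone, and ultimately lets $(b)$ be phrased purely in terms of the distance distribution.
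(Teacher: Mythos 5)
Your proposal is correct and follows essentially the same route as the paper: the same double count (the paper phrases it as counting triples $(\bar x,\bar y,\bar z)$ with $\bar x,\bar z\in C_i$ distinct neighbors of $\bar y$, which is your $S_i$), the same cell-by-cell Cauchy--Schwarz step with the same equality analysis, and the same reduction of (c) to the established equivalence (a)$\Leftrightarrow$(b). Your explicit remark that distance-$2$ pairs are forced into the ``$\delta$'' case is a point the paper leaves implicit, but it is not a different argument.
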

\begin{proof}
(a)$\Leftrightarrow$(b)
Let us calculate in two ways the number $R$ of triples $(\bar x,\bar y, \bar z)$
of vertices such that
$\bar x,\bar z \in C_i$ are different neighbors of $\bar y$. If we choose
$\bar x$, then $\bar z$, and then $\bar y$, then we have
\begin{equation}\label{eq:R-xzy}
R = \sum_{\bar x\in C_i}
(A^i_1(\bar x) \cdot \gamma
+A^i_2(\bar x) \cdot \delta) =|C_i|
( \gamma\A^{ii}_1 +\delta\A^{ii}_2)
\end{equation}
choices.
If we choose $\bar y$ and then $\bar x$ and $\bar z$,
then the number of choices is
\begin{equation}\label{eq:R-yxz}
R = \sum_{\bar x \in V(G)} A^i_1(\bar x)(A^i_1(\bar x)-1)
= \sum_{j=0}^k \sum_{\bar x \in C_j} A^i_1(\bar x)(A^i_1(\bar x)-1)
\end{equation}
Comparing (\ref{eq:R-xzy}) and   (\ref{eq:R-yxz})
and using the Cauchy--Bunyakovsky inequality, we get
$$
|C_i|
( \gamma\A^{ii}_1 +\delta\A^{ii}_2) = \sum_{j=0}^k \sum_{\bar x \in C_j} A^i_1(\bar x)(A^i_1(\bar x)-1)
\geq
\sum_{j=0}^k |C_j|\cdot \A^{ji}_1(\A^{ji}_1-1)
$$
which holds with equality for all $i$ if and only if for all $i,j\in\{0,...,k\}$
and $\bar x\in C_j$ the value $A^i_1(\bar x)$ equals to its average value over $C_j$.
Since the last obviously coincides with the definition of an equitable partition,
(a) and (b) are equivalent.

(c)$\Rightarrow$(a) readily follows from (a)$\Leftrightarrow$(b);
(a)$\Rightarrow$(c) is trivial.
\end{proof}

\section{Main results}\label{s:main}

We are now ready to prove the main results of
our research, namely, the equitability of the partition
generated by an
$(n, 4)'''\opt$ 
code and of the family of subsets generated by an
$(n, 3)'''\opt$ 
code.

\begin{theorem}\label{th:main4}
Let $C_0$ be an
$(n, 4)'''\opt$ 
code.
Then $C_0$ together with the related sets
$C_1,C_2,C_\0,C_\1,C_\2$ defined by
{\rm (\ref{eq:C0def})--(\ref{eq:C2def}) }
form an equitable partition with
quotient matrix
\begin{equation}\label{eq:quo_ext}
S=\left(
\begin{array}{@{\ \ }c@{\ \ }c@{\ \ }r@{\ \ \ \ \ }c@{\ \ }c@{\ \ }r@{\ \ }}
 0&0&0&1&n{-}1&0 \\
 0&0&0&1&n{-}4&3 \\
 0&0&0&0&n{-}1&1 \\[0.4ex]
 1&n{-}1&0&0&0&0 \\
 1&n{-}4&3&0&0&0 \\
 0&n{-}1&1&0&0&0
\end{array}\right)
\end{equation}
\end{theorem}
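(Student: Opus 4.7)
The plan is to apply the criterion of Lemma~\ref{l:equi} in its form (c)$\Rightarrow$(a). By Lemma~\ref{l:012345}, the averages $\A^{ji}_1$ and $\A^{ii}_2$ depend only on the parameters $(n,4)'''\opt$ and not on the particular choice of $C_0$; it therefore suffices to produce a single $(n,4)'''\opt$ code $E$ whose partition $(C_i^E)_{i=0}^{\2}$ is equitable with quotient matrix $S$. Lemma~\ref{l:equi}(c) will then lift equitability to an arbitrary $C_0$, and the common quotient matrix must be $S$, since the entries of the quotient matrix of an equitable partition coincide with its $\A^{ij}_1$.

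For the witness I would take $E$ to be a linear $(n,4)'''\opt$ code, namely a triply-shortened extended Hamming code. For such a linear code each cell is a union of $E$-cosets (because $C_0^E=E$ and $C_\0^E=E+\overline 1$ are $E$-invariant, and the remaining cells are defined via graph distances from these), and the cosets inside a cell are permuted by the coordinate-permutation automorphism group of $E$, which, after fixing the three shortened positions, acts transitively on $\{1,\ldots,n\}\setminus\{i_0\}$, where $i_0$ is the position singled out by the unique weight-$(n-1)$ codeword guaranteed by Lemma~\ref{l:0}. Translation by $E$ and this coordinate action together force the number of neighbors in each cell to be constant on every cell, giving equitability.

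The quotient matrix is then read off from one representative per cell. At $\overline 0 \in C_0^E$ the $n$ neighbors $\bar e_k$ split as one element in $C_\0^E$ (namely $\bar e_{i_0}$, since $\overline 1+\bar e_{i_0}$ is the unique weight-$(n-1)$ codeword) and $n-1$ in $C_\1^E$, yielding row $0$ of $S$. Analogous counts at a representative $\overline 1+\bar e_j$, $j\neq i_0$, of $C_1^E$ and at a representative of $C_2^E$ give rows $1$ and $2$; rows $\0,\1,\2$ follow at once from the involution $\bar x\mapsto \bar x+\overline 1$, which interchanges $C_i^E$ with $C_{\3{i}}^E$ for each $i\in\{0,1,2\}$.

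The main obstacle is establishing equitability in this linear case---specifically, certifying that the entry ``$3$ neighbors in $C_\2^E$'' claimed by the middle row of $S$ is correct for every vertex of $C_1^E$ and not only for the chosen representative, independently of which $E$-coset the vertex inhabits. This is precisely where the coordinate-transitivity of the automorphism group of the triply-shortened extended Hamming code is invoked. An attractive alternative that avoids the witness entirely is to verify Lemma~\ref{l:equi}(b) directly, by computing $\A^{ji}_1$ and $\A^{ii}_2$ from the known distance distribution via the recursions in the proof of Lemma~\ref{l:012345} and checking the identity $2|C_i|\A^{ii}_2=\sum_j |C_j|\A^{ji}_1(\A^{ji}_1-1)$ for each $i\in\{0,1,2,\0,\1,\2\}$.
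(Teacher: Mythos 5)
Your overall skeleton coincides with the paper's: both reduce, via Lemma~\ref{l:012345} and the implication (c)$\Rightarrow$(a) of Lemma~\ref{l:equi}, to exhibiting a single $(n,4)'''\opt$ witness whose partition is equitable with quotient matrix $S$, and both choose the triply-shortened extended Hamming code as that witness. The divergence, and the gap, lies in how the witness is certified. The paper re-embeds the witness into an extended $1$-perfect code $C$ of length $n+3$ via the decomposition $C=C_0 000\cup C_{001}001\cup\cdots\cup C_{111}111$, identifies $C_\0=C_{111}$, $C_2=C_{110}\cup C_{101}\cup C_{011}$, $C_\2=C_{001}\cup C_{010}\cup C_{100}$, and then reads every entry of $S$ off the defining property of a $1$-perfect code; no appeal to automorphisms or transitivity is needed, and the troublesome entry ``$3$'' is immediate.

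Your automorphism argument, as stated, does not close the case of $C_2$ and $C_\2$. Transitivity of the pointwise stabilizer of the three shortened positions on $\{1,\ldots,n\}\setminus\{i_0\}$ does handle the $n-1$ cosets $E+\overline 1+\bar e_j$, $j\neq i_0$, making up $C_1$; but $C_2$ consists of $3$ further cosets of $E$ (there are $2^m=n+3$ cosets of $E$ in $V_{\mathrm{ev}}$ in all), and the pointwise stabilizer of the three deleted points of the underlying affine space fixes each of the three corresponding syndromes, hence fixes each of these three cosets setwise and gives no transitivity among them. You would need the setwise stabilizer of the three shortened positions, whose induced $S_3$ does permute the three cosets transitively --- or, more simply, the paper's identification of these cosets with $C_{110},C_{101},C_{011}$. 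Moreover, even granting transitivity, the count ``$3$ neighbors in $C_\2$'' at a single representative of $C_1$ is never actually computed: it amounts to counting the weight-$3$ words of $C_\0$ through a fixed pair of coordinates (a Steiner-system count that the paper's $1$-perfect-code formulation renders trivial), and your text only asserts it. Your fallback of verifying Lemma~\ref{l:equi}(b) numerically from the recursions of Lemma~\ref{l:012345} is sound in principle but likewise only named, not carried out. As written, the proposal identifies the correct reduction but leaves the witness unverified at exactly the point you yourself flag as the main obstacle.
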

\begin{proof}
By Lemmas~\ref{l:012345} and~\ref{l:equi}, it is sufficient to prove
the statement for some
$(n, 4)'''\opt$ 
code, say, the triply-shortened
extended Hamming code. Indeed, it is easy to check for any triply-shortened
extended $1$-perfect code. For such a code $C_0$, there are seven codes
$C_{001}$, $C_{010}$, $C_{100}$, $C_{110}$, $C_{101}$, $C_{011}$, $C_{111}$
such that the code
$$ C=C_0 000 \cup C_{001}001 \cup C_{010}010 \cup C_{100}100
\cup C_{110}110 \cup C_{101}101 \cup C_{011}011 \cup C_{111}111
$$
is extended $1$-perfect.
Then from the well-known property $C=C+\overline 1$ and from definitions
we derive $C_\0=C_{111}$, $C_\2=C_{001} \cup C_{010} \cup C_{100}$,
 $C_2=C_{110} \cup C_{101} \cup C_{011}$. Now, it is straightforward to check
from the definition of a $1$-perfect code that the partition
$(C_0,C_1,C_2,C_\0,C_\1,C_\2)$ is equitable with quotient matrix (\ref{eq:quo_ext}), see
the similar \cite[Proposition~1]{Kro:2m-3}.
\end{proof}

\begin{theorem}\label{th:main3}
Let $D_0$ be an
$(n, 3)'''\opt$ 
code and let the sets
$D_1$, $D_2$, $D_\0$, $D_\1$, $D_\2$ be defined as
\begin{eqnarray}
D_1 & = & \{ \bar x\in V(H^n) \mid d(\bar x,C_0)=1 \} \label{eq:D0def} \\
D_2 & = & \{ \bar x\in V(H^n) \mid d(\bar x,C_0)>1 \} \label{eq:D1def} \\
D_{\3i} &=& D_i + \overline 1, \qquad i=0,1,2.  \label{eq:D2def}
\end{eqnarray}
Then the collection
 $(D_0,D_1,D_2,D_\0,D_\1,D_\2)$ is an equitable family with the
quotient matrix
$$
S=\left(
\begin{array}{@{\ \ }c@{\ \ }c@{\ \ }r@{\ \ \ \ \ }c@{\ \ }c@{\ \ }r@{\ \ }}
 0&n&0&0&0&0 \\
 1&n{-}4&3&0&0&0 \\
 0&n{-}2&2&0&2&-2 \\[0.4ex]
 0&0&0&0&n&0 \\
 0&0&0&1&n{-}4&3 \\
 0&2&-2&0&n{-}2&2
\end{array}\right)
$$
\end{theorem}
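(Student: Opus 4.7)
The plan is to reduce Theorem~\ref{th:main3} to Theorem~\ref{th:main4} via the parity-check extension. Given an $(n, 3)'''\opt$ code $D_0$, put $C_0 := \{(\bar x, p(\bar x)) : \bar x \in D_0\} \subset V(H^{n+1})$, where $p(\bar x) := w(\bar x) \bmod 2$; since $n+1 = 2^m - 3$, $C_0$ is an $(n+1, 4)'''\opt$ code, and Theorem~\ref{th:main4} yields an equitable partition $(C_0, C_1, C_2, C_\0, C_\1, C_\2)$ of $V(H^{n+1})$ with quotient matrix~(\ref{eq:quo_ext}). Every $\bar x \in V(H^n)$ has two lifts, $\bar x^e := (\bar x, p(\bar x)) \in V_{\mathrm{ev}}$ and $\bar x^o := (\bar x, 1 - p(\bar x)) \in V_{\mathrm{od}}$, joined by a vertical edge in $H^{n+1}$, and I write $c^e(\bar x), c^o(\bar x) \in \{0, 1, 2\}$ for the unique indices with $\bar x^e \in C_{c^e(\bar x)}$ and $\bar x^o \in C_{\3{c^o(\bar x)}}$.

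The central step is to locate each $D$-cell inside this twin-lift picture. By construction $D_0 = \{\bar x : c^e(\bar x) = 0\}$ and $D_\0 = \{\bar x : c^o(\bar x) = 0\}$. Since the $H^{n+1}$-neighbors of $\bar x^o$ are $\bar x^e$ together with $\{(\bar x + \bar e_i)^e : i \leq n\}$, the equitable condition $|N_{H^{n+1}}(\bar x^o) \cap C_0| = S_{\3{c^o(\bar x)},\, 0}$ from~(\ref{eq:quo_ext}) translates into
\[
\mathbf{1}[\bar x \in D_0] + |N_{H^n}(\bar x) \cap D_0| = S_{\3{c^o(\bar x)},\, 0}.
\]
In particular $S_{\0, 0} = 1$ forces $\bar x \in D_\0 \Rightarrow \bar x \in D_0 \cup D_1$, so $D_\0 \cap D_2 = \emptyset$; dually, $D_0 \cap D_\2 = \emptyset$. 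Together with a short case analysis on which of $C_\0, C_\1, C_\2$ contains $\bar x^o$, these vanishings yield the further, less obvious identifications $D_2 = \{\bar x : c^o(\bar x) = 2\}$ and $D_\2 = \{\bar x : c^e(\bar x) = 2\}$.

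Granted these identifications, the four counts $|N_{H^n}(\bar x) \cap D_k|$ for $k \in \{0, \0, 2, \2\}$ are read off from~(\ref{eq:quo_ext}) by the displayed formula and its three analogues, and the remaining two counts follow from
\[
|N_{H^n}(\bar x) \cap D_1| = n - |N_{H^n}(\bar x) \cap D_0| - |N_{H^n}(\bar x) \cap D_2|
\]
and its tilded companion. The theorem then reduces to the routine verification that, for each of the at most seven non-empty intersection types $D_\alpha \cap D_{\3\beta}$ (namely $(\alpha, \beta) \in \{0, 1, 2\}^2 \setminus \{(2, 0), (0, 2)\}$), each of the six counts just computed equals $s_{\alpha, k} + s_{\3\beta, k}$, with $s_{\cdot, \cdot}$ the entries of the matrix of the theorem.

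The main obstacle is the second paragraph above: the family $(D_i)$ is \emph{not} the image of the partition $(C_j)$ under either single projection of lifts. The swap-types $D_1 \cap D_\2$ and $D_2 \cap D_\1$ land in $C_2$ and $C_\2$ respectively under their natural lifts -- not in $C_1$ and $C_\1$ as one might first guess -- and it is precisely this mismatch that produces the negative entries $-2$ in the quotient matrix. Keeping the cell identifications straight, and exploiting the vanishings $D_0 \cap D_\2 = D_\0 \cap D_2 = \emptyset$ implicit in the matrix, is the crux; once that is in hand, the concluding case analysis is entirely mechanical.
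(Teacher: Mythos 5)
Your proposal is correct and takes essentially the same route as the paper's proof: extend $D_0$ by a parity bit, apply Theorem~\ref{th:main4} to the resulting partition of $V(H^{n+1})$, identify each of $D_0,D_\0,D_2,D_\2$ with a $C$-cell via the two lifts of each vertex (including the swap $D_2\leftrightarrow C_\2$, $D_\2\leftrightarrow C_2$, which is exactly the paper's Observation~(*)), and then read the neighbor counts off the quotient matrix (\ref{eq:quo_ext}) in a seven-case check. Your derivation of these identifications and of $D_0\cap D_\2=D_\0\cap D_2=\emptyset$ directly from the first columns of the off-diagonal blocks of (\ref{eq:quo_ext}) is, if anything, slightly more explicit than the paper's ``straightforward from the definitions.''
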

\begin{proof}
We have to prove that, for every
 $i,j\in \{0,1,2\}$ and $k\in \{0,1,2,\0,\1,\2\}$, the number of
vertices of $D_k$ adjacent to a fixed vertex $\bar x\in D_i \cap D_\3{j}$ does not
depend on the choice of $\bar x$ (as well as on the choice of the
initial code $D_0$) and is defined by the following table:
\begin{equation}\label{eq:tab}
(T_{i\,\3j,k}):\quad
\begin{array}{c|@{\ \ }c@{\ \ }c@{\ \ }c@{\ \ \ }c@{\ \ }c@{\ \ }c@{\ \ }|}
& 0 & 1 & 2 & \0 & \1 & \2 \\
\hline
0{\0} & 0&n&0&0&n&0 \\
0{\1} & 0&n&0&1&n{-}4&3 \\
1{\0} & 1&n{-}4&3&0&n&0 \\
1{\1} & 1&n{-}4&3&1&n{-}4&3 \\
1{\2} & 1&n{-}2&1&0&n{-}2&2 \\
2{\1} & 0&n{-}2&2&1&n{-}2&1 \\
2{\2} & 0&n&0&0&n&0 \\ \hline
\end{array}
\end{equation}
  Indeed, for $i\,\3{j}\in\{0\0,0\1,1\0,1\1,1\2,2\1,2\2\}$, the sum of the $i$th and $\3{j}$th rows of the matrix $S$
  coincides with the corresponding row of the table (\ref{eq:tab}).
There are no rows indexed by $0{\2}$ or
$2{\0}$ in the table (\ref{eq:tab}) because, as we will see below
(table (\ref{eq:tbl})),
the intersection of $D_0$ and $D_\2$, as well as $D_\0$ and $D_2$, is empty.

Now, let $C_0$ be the
$(n,4)'''\opt$ 
code obtained from
$D_0$ by appending the parity-check bit to every codeword.
Let the partition $\mathbf{C}=(C_0,\linebreak[1]C_1, \linebreak[1]C_2, \linebreak[1]C_\0, \linebreak[1]C_\1, \linebreak[1]C_\2)$
be defined by
(\ref{eq:C0def})--(\ref{eq:C2def}).
It is straightforward from the definitions of $C_i$ and $D_i$ that
for any vertex $\bar x$ the indexes $i$ and $\3j$ such that
$\bar x \in D_i \cap D_{\3 j}$ can be derived from the knowledge of
cells from $\mathbf{C}$ that contain $\bar x 0$ and $\bar x 1$:
\begin{equation}\label{eq:tbl}
\begin{tabular}{c|c|c}
$\bar x 0 \in$ or $\bar x 1 \in$ & $\bar x 1 \in$ or $\bar x 0 \in$
& $\bar x \in$ \\
\hline
$C_0$ & $C_\0$ & $D_0 \cap D_\0$ \\
$C_0$ & $C_\1$ & $D_0 \cap D_\1$ \\
$C_1$ & $C_\0$ & $D_1 \cap D_\0$ \\
$C_1$ & $C_\1$ & $D_1 \cap D_\1$ \\
$C_1$ & $C_\2$ & $D_2 \cap D_\1$ \\
$C_2$ & $C_\1$ & $D_1 \cap D_\2$ \\
$C_2$ & $C_\2$ & $D_2 \cap D_\2$ \\ \hline
\end{tabular}
\end{equation}
Note that the case $\bar x0 \in C_0$, $\bar x1 \in C_\2$
or similar is impossible, because by Theorem~\ref{th:main4}
an element of $C_0$ has no neighbors in $C_\2$ (i.e., the $0\2$th element of the matrix $S$ in (\ref{eq:quo_ext}) equals $0$).

{\it Observation} (*): \\
$\bar x\in D_0$ if and only if $\bar x 0 \in C_0$ or $\bar x 1 \in C_0$; \\
$\bar x\in D_\0$ if and only if $\bar x 0 \in C_\0$ or $\bar x 1 \in C_\0$; \\
$\bar x\in D_2$ if and only if $\bar x 0 \in C_\2$ or $\bar x 1 \in C_\2$; \\
$\bar x\in D_\2$ if and only if $\bar x 0 \in C_2$ or $\bar x 1 \in C_2$.\\
(From this observation, one can note that there is no strict synchronization between the enumerations of $C_{...}$ and $D_{...}$.)

Now assume, for example,
that $\bar x 0 \in C_1$ and $\bar x 1 \in C_\2$.
By Theorem~\ref{th:main4},
$\bar x 0$ has exactly $3$ neighbors in $C_\2$.
One of them is $\bar x 1$ and the other two have the form $\bar y 0$.
Taking into account observation (*) and the fact that $\bar x 0$
has no neighbors from $C_\2$ because of its unparity, we conclude
that $\bar x$ has exactly $2$ neighbors from $D_2$.
Since $\bar x 0$ has exactly one neighbor in $C_\0$, we also see
that $\bar x$ has exactly one neighbor from $D_\0$.
Similarly, considering the neighborhood of $\bar x 1$ and using
Theorem~\ref{th:main4} and observation~(*), we find that
$\bar x$ has no neighbors in $D_0$ and exactly one neighbor in
$D_\2$. The numbers of neighbors in $D_1$ and in $D_\1$ are calculated
automatically as $n-0-2$ and $n-1-1$ respectively. So, the $1\2$th
line of the table $(T_{i\,\3j,k})$ is confirmed for the vertex $\bar x$.

The other cases can be easily checked by the same way, and there is
no need to duplicate the same arguments with the only difference in
table values.
\end{proof}
\section{Regularity and weight distributions} \label{s:reg}
A code is called
\emph{distance invariant} if its weight distribution with respect to
any codeword does not depend on the choice of the codeword.
 A code is called
\emph{completely regular} if its weight distribution with respect to
some initial vertex depends only on the distance between the initial
vertex and the code. We call a code \emph{completely semiregular} if
its weight distribution with respect to some initial vertex $\bar x$ depends
only on the distance between $\bar x$  and the code and the
distance between $\bar x+\overline 1$  and the code.
\begin{corollary}
{\rm (a)} Any $(n,3)'''\opt$ 
code is completely semiregular.
Any $(n,4)'''\opt$, 
$(n,3)''\opt$, 
or $(n,4)''\opt$ 
code is completely semiregular and distance invariant.

{\rm (b)} Any self-complementary (i.e., $C_0=C_0+\overline 1$) code with parameters
$(n,3)'''\opt$ 
is completely regular.
\end{corollary}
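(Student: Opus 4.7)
The plan is to read off each statement as an immediate consequence of the equitable-structure results already in place: Theorem~\ref{th:main4} for $(n,4)'''\opt$, Theorem~\ref{th:main3} for $(n,3)'''\opt$, and the analogous equitable partitions from \cite{Kro:2m-3} for the doubly-shortened cases. The unifying principle is that in any equitable partition the weight distribution of a fixed cell with respect to $\bar x$ depends only on the cell containing $\bar x$, and that in each of our setups the cell (or pair of cells) of $\bar x$ is visibly determined by the pair $\bigl(d(\bar x, C_0),\, d(\bar x+\overline 1, C_0)\bigr)$.

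For an $(n,4)'''\opt$ code $C_0$, Theorem~\ref{th:main4} makes $(C_0,C_1,C_2,C_\0,C_\1,C_\2)$ an equitable partition. Reading the definitions (\ref{eq:C0def})--(\ref{eq:C2def}), the cell of $\bar x$ is determined by whether each of $d(\bar x, C_0)$ and $d(\bar x, C_\0)=d(\bar x+\overline 1, C_0)$ is $0$, $1$, or at least $2$, which gives complete semiregularity; distance invariance is then immediate because every codeword lies in the single cell $C_0$. The same argument, applied to the four-cell equitable partitions of \cite{Kro:2m-3}, handles the $(n,3)''\opt$ and $(n,4)''\opt$ cases.

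The main technical obstacle is the $(n,3)'''\opt$ case, where Theorem~\ref{th:main3} supplies only an equitable family; equitability of a family controls one-step neighborhoods but not, a priori, full weight distributions. I would reroute the argument through the $(n+1,4)'''\opt$ extension $C_0$ of $D_0$ built in the proof of Theorem~\ref{th:main3}. For $\bar x\in V(H^n)$ and $\bar z\in C_0$, the pair $(d(\bar x 0,\bar z),\, d(\bar x 1,\bar z))$ is always of the form $\{l, l+1\}$ with $l$ equal to the distance from $\bar x$ to the $n$-bit prefix of $\bar z$, and that prefix lies in $D_0$. Summing the indicator $[d(\bar x 0,\bar z)=l]+[d(\bar x 1,\bar z)=l]$ over $\bar z\in C_0$ in two ways yields the recursion
\begin{equation*}
A_l^{C_0}(\bar x 0) + A_l^{C_0}(\bar x 1) \;=\; A_l^{D_0}(\bar x) + A_{l-1}^{D_0}(\bar x),
\end{equation*}
which, with $A_{-1}^{D_0}\equiv 0$, recovers $(A_l^{D_0}(\bar x))_l$ from the weight distributions of $C_0$ at $\bar x 0$ and at $\bar x 1$. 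By the $(n,4)'''\opt$ case each of those depends only on the cell of the corresponding point in the partition $\mathbf{C}$, and by table~(\ref{eq:tbl}) these two cells are determined by $\bigl(d(\bar x,D_0),\, d(\bar x+\overline 1,D_0)\bigr)$; hence $D_0$ is completely semiregular.

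Part~(b) is then a one-line observation: if $D_0=D_0+\overline 1$ then $d(\bar x+\overline 1,D_0)=d(\bar x,D_0)$ for every $\bar x$, so the two parameters of the semiregularity statement coincide and complete semiregularity upgrades to complete regularity. The recursion displayed above is the only nontrivial ingredient in the whole argument; everything else is a direct appeal to the equitability theorems and to the definitions of the cells.
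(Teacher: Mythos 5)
Your proposal is correct, and for the $(n,4)'''\opt$, $(n,3)''\opt$ and $(n,4)''\opt$ cases it coincides with the paper's argument: the equitable partition, plus the standard fact (which the paper imports from \cite{Kro:struct} via the matrix identity (\ref{eq:PS})) that in an equitable partition of $H^n$ the radius-$r$ sphere sums of every cell are determined by the cell containing the center, plus the observation that this cell is determined by the pair $\bigl(d(\bar x,C_0),d(\bar x+\overline 1,C_0)\bigr)$. Where you genuinely diverge is the $(n,3)'''\opt$ case. The paper applies the sphere-sum fact \emph{directly} to the six-cell equitable family $(D_0,\ldots,D_{\2})$: the defining identity (\ref{eq:PS}) is literally the same matrix equation for a family as for a partition, and the argument of \cite{Kro:struct} (the distance-$r$ adjacency matrices of $H^n$ are polynomials in the adjacency matrix, so $\chi(\bar x)$ determines all sphere sums of all cells) goes through verbatim; so your worry that family-equitability ``controls one-step neighborhoods but not, a priori, full weight distributions'' is unfounded. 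Your workaround is nevertheless valid: the recursion $A_l^{C_0}(\bar x 0)+A_l^{C_0}(\bar x 1)=A_l^{D_0}(\bar x)+A_{l-1}^{D_0}(\bar x)$ is correct, the sum is symmetric in $\bar x 0,\bar x 1$, and table (\ref{eq:tbl}) is a bijection between unordered cell pairs and the intersections $D_i\cap D_{\3j}$, which are in turn determined by the two distances. What your route buys is that it needs only the partition version of the sphere-sum lemma, at the cost of being tied to the parity-extension construction; the paper's route is shorter and shows that equitable families inherit the weight-distribution machinery wholesale. Part (b) is handled identically in both. The only soft spot in your write-up is that the ``unifying principle'' for partitions is asserted rather than justified, but the paper does the same by citation, so this is not a gap relative to the paper's own standard.
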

The last statement can be treated as that any
$(n=2^m-4, 2^{n-m-1}, 3)$
code
in the \emph{folded} hypercube graph of degree $n$ (the graph obtained by merging the antipodal pairs of vertices) is completely regular.
\begin{proof}
Let $D_0$ be an
$(n,3)'''\opt$ 
code,
and let $\chi$ be the characteristic vector-function of
its generated equitable family $(D_0,D_1,D_2,D_\0,D_\1,D_\2 )$
(defined in {\rm(\ref{eq:D0def})--(\ref{eq:D2def})})
i.e., $\chi(\bar x) = (\chi_{D_0}(\bar x),\ldots,\chi_{D_\2}(\bar x))$
where $\chi_{\ldots}$ denotes the characteristic function of the corresponding set.
Then the $2^n \times 6$ value table  $\overline\chi$ of $\chi$ satisfies the equation
\begin{equation}\label{eq:PS}
D\overline\chi = \overline\chi S
\end{equation}
where $D$ is the adjacency $2^n \times 2^n$ matrix of the hypercube and $S$
is the quotient matrix defined in Theorem~\ref{th:main3} (equation (\ref{eq:PS}) is just a matrix treatment of the definition of an equitable family). Equation (\ref{eq:PS}) yields  (see, e.g., \cite{Kro:struct}) that
the value of $\chi$ in a point $\bar x$ uniquely determine the sum of $\chi$ over the sphere
of every radius $r$ centered in $\bar x$. Clearly, the $i$th element of this vector sum
denotes how many elements of $D_i$ are there at distance $r$ from $\bar x$.
To conclude the validity of (a)
for $(n,3)'''\opt$ 
codes, it remains to note that the value $\chi(\bar x)$
is uniquely determined by the distances $d(\bar x,D_0)$ and $d(\bar x+\overline 1,D_0)$.
(b) is an obvious corollary of (a).

If $C_0$ be an
$(n,4)'''\opt$ 
code,
then, as follows from the definition {\rm(\ref{eq:C0def})--(\ref{eq:C2def})} of the partition
$(C_0,C_1,C_2,C_\0,C_\1,C_\2 )$, the distances between
$\bar x$ and $C_0$ and between $\bar x+\overline 1$ and $C_0$
determine the cell $C_i$ containing $\bar x$. By the arguments
similar to the previous case, the weight distribution is also uniquely
determined.

The proofs of (a) for
$(n,3)''\opt$ 
$(n,4)''\opt$ 
codes are similar,
based on the generated equitable partition \cite{Kro:2m-3}.
\end{proof}

Explicit formulas for weight distributions
and weight enumerators of equitable families
(or their real-valued generalizations)
can be found in \cite{Kro:struct}.

\section{More properties} \label{s:pro}

A real-valued function on $V(H^n)$ is called {\em $1$-centered} if its sum over every radius-$1$ ball
equals $1$. For example, the characteristic functions of $1$-perfect codes are $\{0,1\}$-valued
$1$-centered functions.
Although there are $(n,3)'''\opt$ codes that cannot be lengthened to $1$-perfect codes length $n+3$,
the characteristic function of every such code occurs as a subfunction of $\{0,\frac 13,1\}$-valued
$1$-centered function on $V(H^{n+3})$:
\begin{corollary}
  For every $(n,3)'''\opt$ code $C_0$, the function $f:V(H^{n+3})\to\{0,\frac 13,1\}$ defined as follows is $1$-centered:
  $$
  \begin{array}{lllllll}
  f(\bar x 000) & = & \chi_{C_0}(\bar x), & &
  f(\bar x 111) & = & \chi_{C_\0}(\bar x), \\
  f(\bar x 001) & = & f(\bar x 010) & = & f(\bar x 100) & = & \chi_{C_2}(\bar x)/3, \\
  f(\bar x 110) & = & f(\bar x 101) & = & f(\bar x 011) & = & \chi_{C_\2}(\bar x)/3,
  \end{array}
  $$
  where $C_\0$, $C_2$, $C_\2$ are defined in {\rm (\ref{eq:C0def})--(\ref{eq:C2def})} and $\chi_S$ denotes the characteristic function of a set $S$.
\end{corollary}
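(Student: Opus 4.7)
The plan is to verify directly that, for every $\bar y=\bar x\bar b\in V(H^{n+3})$ with $\bar x\in V(H^n)$ and $\bar b\in V(H^3)$, the closed-ball sum
\[
\Sigma(\bar y)\,:=\,\sum_{\bar z\in B(\bar y,1)}f(\bar z)
\]
equals $1$. The sum splits naturally into a \emph{horizontal} part, running over $\bar x$ and its $n$ neighbors in $V(H^n)$ (with the last three coordinates held fixed at $\bar b$), plus a \emph{vertical} part over the three one-bit flips of $\bar b$. Two symmetries collapse the case analysis sharply: $f$ is invariant under every permutation of the last three coordinates, and under the global complementation $(\bar x,\bar b)\mapsto(\bar x+\bar 1,\bar b+\bar 1)$ (since the definitions give $C_0+\bar 1=C_\0$ and $C_2+\bar 1=C_\2$). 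Hence it suffices to handle $\bar b=000$ and one weight-one pattern, say $\bar b=001$. Throughout we apply Theorem~\ref{th:main3} with $D_0:=C_0$, so that $D_\0=C_\0$, $D_2=C_2$, $D_\2=C_\2$ and $D_1,D_\1$ are the auxiliary cells of the equitable family.

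For $\bar b=000$ the three vertical neighbors jointly contribute $\chi_{C_2}(\bar x)$, and the horizontal contribution is $\chi_{C_0}(\bar x)+A_1^{C_0}(\bar x)$, where $A_1^{C_0}(\bar x)$ counts neighbors of $\bar x$ lying in $C_0$. A three-way split on which of $D_0,D_1,D_2$ contains $\bar x$ gives the three terms as $1+0+0$, $0+1+0$, or $0+0+1$, respectively; the middle case uses that the distance-$3$ property of $C_0$ forces $A_1^{C_0}(\bar x)=1$ whenever $\bar x\in D_1$.

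For $\bar b=001$ the vertical neighbors yield $\chi_{C_0}(\bar x)+\tfrac23\chi_{C_\2}(\bar x)$ and the horizontal contribution is $(\chi_{C_2}(\bar x)+A_1^{C_2}(\bar x))/3$. Multiplying the target identity $\Sigma(\bar y)=1$ by $3$, the task reduces to proving
\[
\chi_{C_2}(\bar x)+A_1^{C_2}(\bar x)+3\chi_{C_0}(\bar x)+2\chi_{C_\2}(\bar x)=3\qquad\text{for every }\bar x\in V(H^n).
\]
Theorem~\ref{th:main3} reads $A_1^{C_2}(\bar x)=S_{i,D_2}+S_{\3j,D_2}$ whenever $\bar x\in D_i\cap D_{\3j}$, and the $D_2$-column of $S$ is $(0,3,2,0,0,-2)^{\!\top}$ in the ordering $(D_0,D_1,D_2,D_\0,D_\1,D_\2)$. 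Running through the seven non-empty intersections $D_i\cap D_{\3j}$ listed in~(\ref{eq:tbl}), the left-hand side above evaluates to $3$ in each case.

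The main obstacle is the case $\bar b=001$: the identity succeeds only because of the negative entries $-2$ sitting in the $D_2$- and $D_\2$-rows of the quotient matrix of Theorem~\ref{th:main3}. These compensate for the simultaneous presence of $\chi_{C_2}$ and $\chi_{C_\2}$ together with the factor~$1/3$ attached to the middle-weight coordinates of $\bar b$, and they are exactly what forces the range $\{0,\tfrac13,1\}$ rather than $\{0,1\}$. Once these cancellations are set up correctly, the rest of the verification is a short, finite table check.
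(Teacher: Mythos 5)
Your proposal is correct and follows essentially the same route as the paper: a direct verification that every radius-$1$ ball sums to $1$, reduced by the permutation and complementation symmetries to the cases $\bar b=000$ and $\bar b=001$, with the horizontal neighbor counts read off from Theorem~\ref{th:main3} (equivalently, from the array (\ref{eq:tab}), whose rows are exactly the sums $S_{i,\cdot}+S_{\3j,\cdot}$ you use). Your identification of the corollary's $C_2,C_\2$ with the cells $D_2,D_\2$ of the equitable family of Theorem~\ref{th:main3} is the reading that makes the check go through, and your case table matches the intended computation.
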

The proof consists of straightforward checking the definition by utilizing the array (\ref{eq:tab}).
This embedding result makes some facts known for centered functions (see, e.g., \cite{AvgVas:2006:testing}) applicable for studying $(n,3)'''\opt$ codes
(for similar embedding result for $(n,3)''\opt$ codes, see \cite[Section~4]{Kro:2m-3}).

It is worth to mention here another important common property of the considered classes of codes,
which also can be derived from the results above, but actually has a more direct prove, found in \cite{KOP:2011}.
\begin{theorem}[\cite{KOP:2011}]\label{th:OA}
  Every $(n,3)''\opt$, $(n,4)''\opt$, $(n,3)'''\opt$, or $(n,4)'''\opt$ code $C$ forms
  an {\em orthogonal array} of strength $t=\frac{n-3}2$, $t=\frac{n-4}2$, $t=\frac{n-4}2$, $t=\frac{n-5}2$ respectively;
  that is, for every $t$ coordinates and every values of these coordinates, there are exactly
  $|C|/2^t$ codewords that contain the given values in the given coordinates.
  In an equivalent terminology, the characteristic function of $C$ is {\em correlation immune} of degree $t$.
\end{theorem}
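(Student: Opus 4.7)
The plan is to derive Theorem~\ref{th:OA} from the equitable partition or family structure: Theorems~\ref{th:main4} and~\ref{th:main3} for the triply-shortened cases, and the analogous results from~\cite{Kro:2m-3} for the doubly-shortened cases. For such a structure with quotient matrix $S$, each cell characteristic vector decomposes as a sum of eigenvectors of the hypercube adjacency matrix $D$ whose eigenvalues lie in $\mathrm{spec}(S)$, while being an orthogonal array of strength $t$ is a spectral condition on the characteristic vector of the code.

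Concretely, writing $\overline\chi$ for the value matrix of the cells, the identity $D\overline\chi=\overline\chi S$ from~(\ref{eq:PS}) shows that for every right eigenvector $Sv=\lambda v$ the vector $\overline\chi v$ is either zero or a $\lambda$-eigenvector of $D$; hence each $\chi_{C_i}$ lies in the sum of $D$-eigenspaces whose eigenvalues belong to $\mathrm{spec}(S)$. Since the $(n-2k)$-eigenspace of $D$ is spanned by the Walsh characters $\psi_J(\bar x)=(-1)^{\langle \bar x,\mathbf{1}_J\rangle}$ with $|J|=k$, and since $C$ is an orthogonal array of strength $t$ iff $\chi_C\perp\psi_J$ for all nonempty $J$ with $|J|\le t$, it suffices to verify that $n-2k\notin\mathrm{spec}(S)$ for every $k=1,\ldots,t$.

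The remaining work is a direct eigenvalue computation of $S$ in each of the four cases. For Theorem~\ref{th:main4} the matrix $S$ has antidiagonal $3\times 3$ blocks equal to the matrix $M$ with rows $(1,n-1,0)$, $(1,n-4,3)$, $(0,n-1,1)$, so $\mathrm{spec}(S)=\mathrm{spec}(M)\cup(-\mathrm{spec}(M))$; the row sums, trace, and determinant of $M$ (namely $n$, $n-2$, $-3n$) immediately force $\mathrm{spec}(M)=\{n,1,-3\}$. For $n=2^m-3$ odd, the smallest $k\ge 1$ with $n-2k\in\{\pm n,\pm 1,\pm 3\}$ is $k=(n-3)/2$, yielding $t=(n-5)/2$. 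For Theorem~\ref{th:main3} the matrix $S$ has diagonal $3\times 3$ blocks equal to $A$ and off-diagonal blocks equal to $B$, so its spectrum equals $\mathrm{spec}(A+B)\cup\mathrm{spec}(A-B)$; short computations give $\{n,0,-4\}$ and $\{n,2,-2\}$ respectively, and for $n=2^m-4$ even the critical eigenvalue is $2$, producing $t=(n-4)/2$. The doubly-shortened quotient matrices from~\cite{Kro:2m-3} admit identical treatment and yield the stated strengths $(n-3)/2$ and $(n-4)/2$.

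The one delicate point, which I expect to be the main obstacle in making the argument watertight, concerns Theorem~\ref{th:main3}: the six sets $D_0,D_1,D_2,D_\0,D_\1,D_\2$ form an equitable \emph{family}, not a partition, and the columns of $\overline\chi$ satisfy two linear relations (each of $\{D_0,D_1,D_2\}$ and $\{D_\0,D_\1,D_\2\}$ separately partitions $V(H^n)$). Consequently some eigenvalues of $S$ may fail to appear in the spectral decomposition of $\chi_{D_0}$. Fortunately only the one-sided inclusion ``the $D$-eigenvalues appearing in $\chi_{D_0}$ are contained in $\mathrm{spec}(S)$'' is required, and this follows from $D\overline\chi=\overline\chi S$ without any assumption on the linear independence of the columns of $\overline\chi$; so the plan goes through and the spectral computations above yield the claimed orthogonal-array strengths.
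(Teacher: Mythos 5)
Your proposal is correct, but note that the paper itself does not actually prove Theorem~\ref{th:OA}: the statement is imported from \cite{KOP:2011}, where it is established by a more direct argument, and the surrounding text merely remarks that it ``can be derived from the results above.'' Your spectral derivation is precisely the derivation that remark alludes to, carried out in full. The key steps all check out: the identity $D\overline\chi=\overline\chi S$ of (\ref{eq:PS}) holds for equitable families as well as for partitions, so each column of $\overline\chi$ is annihilated by $p(D)$ for $p$ the characteristic polynomial of $S$ and hence lies in the sum of the $D$-eigenspaces with eigenvalues in $\mathrm{spec}(S)$; the characterization of orthogonal arrays of strength $t$ via vanishing Walsh coefficients of weights $1,\dots,t$ is standard; and your eigenvalue computations for the two triply-shortened quotient matrices are correct ($\{\pm n,\pm 1,\pm 3\}$ for Theorem~\ref{th:main4} and $\{n,0,-4\}\cup\{n,2,-2\}$ for Theorem~\ref{th:main3}), which for $n=2^m-3$ odd and $n=2^m-4$ even give $t=(n-5)/2$ and $t=(n-4)/2$ as claimed. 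You also correctly identify and dispose of the only delicate point, namely that for the equitable family of Theorem~\ref{th:main3} only the one-sided spectral inclusion is needed, and that this follows from $D\overline\chi=\overline\chi S$ alone. The one place where you take something on faith is the doubly-shortened pair: the distance-$3$ quotient matrix is reproduced in the introduction and indeed has spectrum $\{n,1,-1,-3\}$, yielding $t=(n-3)/2$, but the distance-$4$ quotient matrix appears only in \cite{Kro:2m-3}, so ``identical treatment'' there genuinely requires consulting that reference. Compared with the direct proof in \cite{KOP:2011}, your route buys a conceptual explanation --- the orthogonal-array strength is read off from the spectrum of the quotient matrix as a formal consequence of equitability --- at the cost of resting on the main theorems of this paper, whereas the cited proof is independent of them.
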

Note that the similar property of $(n,3)\opt$ and $(n,4)\opt$ codes is well known ($t=\frac{n-1}2$, $t=\frac{n-2}2$);
for $(n,3)'\opt$ and $(n,4)'\opt$ codes it also trivially holds ($t=\frac{n-2}2$, $t=\frac{n-3}2$)
 because they can be lengthened to $(n+1,3)\opt$ and $(n+1,4)\opt$, respectively.


\begin{thebibliography}{1}

\bibitem{AvgVas:2006:testing}
S.~V. Avgustinovich and A.~{Yu}. Vasil'eva.
\newblock Testing sets for $1$-perfect code.
\newblock In R.~Ahlswede, L.~B{\"a}umer, N.~Cai, H.~Aydinian, V.~Blinovsky,
  C.~Deppe, and H.~Mashurian, editors, {\em General Theory of Information
  Transfer and Combinatorics}, volume 4123 of {\em Lect. Notes Comput. Sci.},
  pages 938--940. Springer-Verlag, Berlin Heidelberg, 2006.
\newblock \DOI{10.1007/11889342\_59}.

\bibitem{BesBro77}
M.~R. Best and A.~E. Brouwer.
\newblock The triply shortened binary {H}amming code is optimal.
\newblock {\em
  \href{http://www.sciencedirect.com/science/journal/0012365X}{Discrete
  Math.}}, 17(3):235--245, 1977.
\newblock \DOI{10.1016/0012-365X(77)90158-3}.

\bibitem{Bla99}
T.~Blackmore.
\newblock Every binary $(2^m - 2, 2^{2^m-2-m}, 3)$ code can be lengthened to
  form a perfect code of length $2^m - 1$.
\newblock {\em
  \href{http://ieeexplore.ieee.org/xpl/RecentIssue.jsp?punumber=18}{IEEE Trans.
  Inf. Theory}}, 45(2):698--700, 1999.
\newblock \DOI{10.1109/18.749014}.

\bibitem{Kro:2m-3}
D.~S. Krotov.
\newblock On the binary codes with parameters of doubly-shortened $1$-perfect
  codes.
\newblock {\em \href{http://www.springerlink.com/content/100256/}{Des. Codes
  Cryptography}}, 57(2):181--194, 2010.
\newblock \DOI{10.1007/s10623-009-9360-5}.
arXiv: \href{http://arxiv.org/abs/0907.0002}{0907.0002}

\bibitem{Kro:struct}
D.~S. Krotov.
\newblock On weight distributions of perfect colorings and completely regular
  codes.
\newblock {\em \href{http://www.springerlink.com/content/100256/}{Des. Codes
  Cryptography}}, 2010.
\newblock To appair. \DOI{10.1007/s10623-010-9479-4}.
arXiv: \href{http://arxiv.org/abs/0907.0001}{0907.0001}

\bibitem{KOP:2011}
D.~S. Krotov, P.~R.~J. {\"O}sterg{\aa}rd, and O.~Pottonen.
\newblock On optimal binary one-error-correcting codes of lengths $2^m-4$ and
  $2^m-3$.
\newblock submitted, 2011.

\bibitem{OstPot:13-512-3}
P.~R.~J. {\"O}sterg{\aa}rd and O.~Pottonen.
\newblock Two optimal one-error-correcting codes of length 13 that are not
  doubly shortened perfect codes.
\newblock {\em \href{http://www.springerlink.com/content/100256/}{Des. Codes
  Cryptography}}, 59(1-3):281--285, 2011.
\newblock \DOI{10.1007/s10623-010-9450-4}. 
arXiv: \href{http://arxiv.org/abs/0909.2526}{0909.2526}

\end{thebibliography}

\providecommand\href[2]{#2} \providecommand\url[1]{\href{#1}{#1}}
  \providecommand\bblmay{May} \providecommand\bbloct{October}
  \providecommand\bblsep{September} \def\DOI#1{{\small {DOI}:
  \href{http://dx.doi.org/#1}{#1}}}\def\DOIURL#1#2{{\small{DOI}:
  \href{http://dx.doi.org/#2}{#1}}}\providecommand\bbljun{June}

\end{document}